\newtheorem{theorem}{Theorem}[section]
\newtheorem{lemma}[theorem]{Lemma}
\newtheorem{remark}[theorem]{Remark}
\newtheorem{assumption}[theorem]{Assumption}
\begin{document}

\title{Behavioural investors in conic market models\thanks{Supported
by the ``Lend\"ulet'' grant LP 2015-6/2015 of the
Hungarian Academy of Sciences and by the NKFIH (National Research, Development and Innovation Office, Hungary) 
grant KH 126505. We thank an anonymous referee for useful comments. The paper is
dedicated to Yuri M. Kabanov, hoping that he will be satisfied with
the generality of the considered model class.}}
\author{Huy N. Chau \and Mikl\'os R\'asonyi}
\date{\today}
\maketitle

\begin{abstract}
We treat a fairly broad class of financial models which includes markets with proportional
transaction costs. We consider an investor with cumulative prospect theory preferences and a non-negativity
constraint on portfolio wealth. The existence of an optimal strategy is shown in this context in 
a class of generalized strategies.
\end{abstract}

\section{Introduction}

In this paper we continue the investigations of \cite{sicon} where behavioural
investors were studied in a model with price impact. In the current work
we treat the case of conic models, see \cite{ks}, which subsume foreign exchange markets as well as multi-asset
markets with proportional transaction costs.

The mathematical difficulty stems from the fact that behavioural preferences lack
concavity and involve probability distortions, see \cite{kt}, \cite{quiggin}, \cite{tk}. Hence, instead of almost sure techniques,
we need to employ weak convergence in the arguments. In Theorem \ref{main} below we
establish the existence of optimizers in a suitable class of generalized strategies.
We rely on results of \cite{jakubowski}, see Theorem \ref{altalanos} below.

In Section \ref{harom} we present our model. In Section \ref{ins}
we construct optimal strategies for investment 
problems with behavioural preferences. Section \ref{negy} collects auxiliary material.

\section{Conic market model}\label{harom}

We will assume throughout the paper that trading takes place continuously
in the time interval $[0,1]$. Let $(\Omega, \mathcal{F}, (\mathcal{H}_t)_{t\in [0,1]}, P)$ be a filtered 
probability space, where the filtration is complete and right-continuous, $\mathcal{H}_0$ is trivial.
The notation $EX$ will refer to the expectation of the 
random variable $X$. If there is ambiguity about the probability measure then $E_QX$ will denote
the expectation of $X$ under the probability $Q$. Similarly, $\mathrm{Law}(X)$ denotes the law of
$X$ and $\mathrm{Law}_Q(X)$ refers to its law under $Q$.
When $x,y$ are vectors in the same Euclidean space then the concatenation $xy$ denotes their
scalar product, $|x|$ is the Euclidean norm.

In the sequel we will need that the filtration is of a specific type and that the probability space 
is large enough. 
\begin{assumption}\label{u}
	There exists a c\`adl\`ag $\mathbb{R}^m$-valued process $Y$
	with independent increments such that $\mathcal{H}_t$ is the $P$-completion of $\sigma(Y_u,\, 0\leq u\leq t)$,
	for $t\in [0,1]$.
\end{assumption}

For $m\in\mathbb{N}$, we denote by $\mathcal{D}^m$ the space of $\mathbb{R}^m$-valued RCLL functions on $[0,1]$ 
equipped with Skorohod's topology, see Chapter 3 of \cite{billingsley}.

\begin{remark}\label{ipszilon}{\rm The Borel-field of 
		$\mathcal{D}^m$ is generated by the coordinate 
		mappings $x\in\mathcal{D}^m\to x(t)\in\mathbb{R}^m$, $t\in [0,1]$, see Theorem 12.5 of 
		\cite{billingsley}. It follows that the function $\omega\in \Omega\to Y(\omega)\in\mathcal{D}^m$
		is a random variable and so is $\omega\in \Omega\to ^t Y(\omega)\in\mathcal{D}^m$, for
		all $t\in [0,1]$, where $^t Y$ is the process defined as
		$(^t Y)_u=Y_u 1_{[0,t)}+Y_t 1_{[t,1]}$, $u\in [0,1]$. 
		Furthermore, $\mathcal{H}_t=\sigma(^t Y)$, for all $t\in [0,1]$.}
\end{remark}
\begin{assumption}\label{ghj}
	There exists a random variable $U$ that is uniformly distributed on $[0,1]$ and independent
	of $\mathcal{H}_1$. 
\end{assumption}


Let us define the augmented filtration $\mathcal{F}_t:=\mathcal{H}_t\vee \sigma(U)$, $t\in [0,1]$. 
Standard arguments show that $\mathcal{F}_t$, $t\in [0,1]$ also satisfies
the usual hypotheses of completeness and right-continuity. 

We now recall the market model presented in Subsection 3.6.3 of \cite{ks}.
Let $\xi^k_t$, $t\in [0,1]$, be $\mathcal{H}$-adapted $\mathbb{R}^d$-valued processes for each $k\in\mathbb{N}$
such that, for a.e. $\omega$ and for all $t$, only finitely many terms of the sequence $\xi^k_t(\omega)$, $k\in\mathbb{N}$ differ from $0$. 
Let $G_t(\omega)$, $t\in [0,1]$, $\omega\in\Omega$ denote the polyhedral cone generated by
$\xi_t^k(\omega)$, $k\in\mathbb{N}$. We assume that $\mathbb{R}^d_+\subset G_t$ a.s. for each $t\in [0,1]$. 
Let the dual cones be defined by $G_t^*(\omega):=\{x\in\mathbb{R}^d:\ xy\geq 0\mbox{ for all }y\in G_t(\omega)\}$.
We imagine that $G_t(\omega)$ represents the set of solvent positions in $d$ financial assets
at time $t$ in the state of the world $\omega\in\Omega$.

\begin{assumption}\label{dual}
There is a family of $\mathcal{H}$-adapted \emph{continuous} processes $\zeta^k_t$, $t\in [0,1]$, $k\in\mathbb{N}$
such that $G_t^*(\omega)$ is generated by $\zeta_t^k(\omega)$, $k\in\mathbb{N}$ and only finitely many terms of this
sequence differ from $0$, for a.e. $\omega$ and for every $t$.
\end{assumption} 

Although the dual generators $\zeta^k$, $k\in\mathbb{N}$ are assumed to be continuous processes, 
the above assumption allows them
to depend on a driving process $Y$ with possibly discontinuous paths (consider e.g.\ a stochastic volatility model with jumps in the
volatility). 

The following assumption requires that there is efficient friction in the market, see page 158 of \cite{ks}.

\begin{assumption}\label{ef}
Fore each $t\in [0,1]$ and for a.e.\ $\omega\in\Omega$, $\mathrm{int}\, G_t^*(\omega)\neq\emptyset$.
\end{assumption}

Let $\mathfrak{D}$ denote the set of $\mathcal{H}$-adapted martingales $Z_t$, $t\in [0,1]$ such that $Z_t\in \mathrm{int}\, G_t^*$ and
$Z_{t-}\in\mathrm{int}\, G_t^*$ a.s.\ for each $t\in [0,1]$. The next assumption 
is essentially condition $\mathbf{B}$ on page 160 of \cite{ks}, it stipulates that there is a
rich enough class of objects in $\mathfrak{D}$. 

\begin{assumption}\label{mcps}
Assume that $\mathfrak{D}$ is nonempty. For each $s\in [0,1]$, and for each $\mathcal{H}_s$-measurable random variable $\xi$ if 
$\xi Z_s\geq 0$ for all $Z\in \mathfrak{D}$ 
then $\xi\in G_s$ a.s. 
\end{assumption}

For an $\mathbb{R}^d$-valued $\mathcal{F}_t$-adapted c\`adl\`ag process $X$ with bounded variation we denote by $||X||$ its
total variation process (scalar-valued) and let $\dot{X}$ denote the pathwise Radon-Nykodim
derivative of $X$ with respect to $||X||$, this can be chosen to be an $\mathbb{R}^d$-valued process. 
Let $\mathcal{X}^0$ denote the family of $\mathcal{F}$-adapted processes with bounded variation $X$
such that $X_0 = 0$ and $\dot{X}_t\in -G_t$ a.s.\ for all $t\in [0,1]$. 
These processes represent the evolution of portfolio
positions in a self-financing way, starting from initial position $0$. 

For each integer $k\geq 1$, 
consider $\mathcal{C}^k$, the space
of $\mathbb{R}^k$-valued continuous functions on the unit interval. This is a separable Banach space with the supremum norm.
Let $\mathfrak{M}^{2d}$ denote the Banach space of $2d$-tuples of finite signed measures on $\mathcal{B}([0,1])$. This is the dual space 
of $\mathcal{C}^{2d}$ with the total variation norm, henceforth denoted by $||\cdot||_1$.
However, in the seuqel we
equip $\mathfrak{M}^{2d}$ with the weak-$*$ topology in the natural dual pairing between $\mathcal{C}^{2d}$ and $\mathfrak{M}^{2d}$.

\begin{remark}
{\rm Let us notice that if 
$X\in\mathcal{X}^0$ then, for each $\omega\in\Omega$, 
$X(\omega)$ can be naturally identified with an element of $\mathfrak{M}^{2d}$. Indeed, we may consider 
$$
\overline{X}^{2j-1}(\omega)(A):=\int_A(\dot{X}^j_t)^+ d||X||_t(\omega),\ A\in\mathcal{B}([0,1]),\ j=1,\ldots,d,
$$
and
$$
\overline{X}^{2j}(\omega)(A):=\int_A(\dot{X}^j_t)^- d||X||_t(\omega),\ A\in\mathcal{B}([0,1]),\ j=1,\ldots,d.
$$
Furthermore, we claim that the mapping $\overline{X}:\Omega\to \mathfrak{M}^{2d}$ is $\mathcal{F}_1$-measurable.
Indeed, it suffices to show that for each continuous $\phi:[0,1]\to\mathbb{R}^d$,
the mapping $\omega\to\int_0^1 \phi(u)(\dot{X}^j_u)^+ d||X||_u(\omega)$ is $\mathcal{F}_1$-measurable for each $j=1,\ldots,d$ (similarly for $(\dot{X}^j_u)^-$), which is clear since $X$ is c\`adl\`ag and adapted.
By similar arguments, $\omega\to ^t \overline{X}(\omega)$ is $\mathcal{F}_t$-measurable, for every $t\in [0,1]$,
where $^t\overline{X}(\omega)(A):=\overline{X}(\omega)(A\cap [0,t])$. We will identify $X$ with $\overline{X}$ in the
sequel: when we write $X$ it may refer to either the stochastic process or to the $\mathfrak{M}^{2d}$-valued
random variable. A similar identification of $^t X$
with $^t \overline{X}$ will also be used.}
\end{remark}


For each initial position $x\in G_0$, 
we furthermore define $\mathcal{A}(x):=\{X\in\mathcal{X}^0:\ x+X_t  \in G_t\mbox{ a.s.\ for all }t\in [0,1]\}$,
the portfolio value processes which never become insolvent.

\begin{remark}
{\rm Investment decisions will be based on the augmented filtration $\mathcal{F}$. It is pointed out in \cite{cr} that by using a uniform $U$ (independent of $\mathcal{H}_1$) for randomizing the strategies an investor can increase her satisfaction, however, further randomizations are pointless. See Remarks 22 and 23 of \cite{sicon} and Section 5 of \cite{cr} 
for detailed explanations. Unlike other studies, we assume that the ``dual process'' $Z$ is $\mathcal{H}$-adapted, since information from $U$ does not weaken market viability.}   
\end{remark}

We fix a function $\ell:\mathcal{D}^m\times\mathbb{R}^d\to\mathbb{R}$ (interpreted as a \emph{liquidation function})
which transfers the terminal portfolio position into cash. We assume that it is continuous.   
The liquidation value of a position $x\in\mathbb{R}^d$ is $\ell(Y,x)$ (so it depends on the
market situation via $Y$).


\section{Optimal investments}\label{ins}

For $z\in\mathbb{R}$ we denote $z^+:=\max\{z,0\}$, $z^-:=\max\{-z,0\}$.
Let $u_+,u_-: \mathbb{R}_+ \to \mathbb{R}_+$ be continuous, increasing functions such that $u_{\pm}(0)=0$.
Let $w_+,w_-: [0,1]\to [0,1]$ be continuous with $w_{\pm}(0)=0$, $w_{\pm}(1)=1$. Functions
$u_{\pm}$ express the agent's attitude towards gains and losses while $w_{\pm}$ are
functions distorting the probabilities of events, see \cite{tk}, \cite{cr}.

We define, for any random variable $X\geq 0$,
\begin{eqnarray}\nonumber
V_+(X):=\int_0^{\infty} w_+\left(P\left(
u_+\left(X\right)\geq y
\right)\right)dy,
\end{eqnarray}
and
\begin{eqnarray}\nonumber
V_-(X):=\int_0^{\infty} w_-\left(P\left(
u_-\left(X\right)\geq y
\right)\right)dy.
\end{eqnarray}
For each real-valued random variable $X$ with $V_+(X^+)<\infty$ we set 
\begin{equation}\nonumber
V(X):=V_+(X^+)-V_-(X^-). 
\end{equation}

\begin{assumption}\label{bou}
The function $u_+$ is bounded from above.
\end{assumption}

Assumption \ref{bou} could be substantially relaxed at the price of requiring stronger
assumptions about $\mathfrak{D}$ but this would significantly complicate the arguments. 
Let $W$ be an $\mathcal{H}_1$-measurable $d$-dimensional random variable representing a reference point for the investor
in consideration. 
Notice that under Assumption \ref{bou} the functional $V(\ell(Y,X_1-W))$ is well-defined for every $X\in\mathcal{A}(x)$.

The quantity $V(\ell(Y,X_1-W))$ expresses the satisfaction of an agent with CPT preferences when (s)he
has a portfolio process $X$, see \cite{jz,cr} for more detailed discussions. 
Positive $\ell(Y,X_1-W)$ means outperforming the benchmark $W$, negative $\ell(Y,X_1-W)$
means falling short of it. Doob's theorem implies that there is a measurable 
$h:\mathcal{D}^m\to \mathbb{R}^d$ such that
$W=h(Y)$.

We aim to find an optimal investment strategy, i.e.  $X^{\dagger}\in\mathcal{A}(x)$ with
\begin{equation*}\label{problem}
V(\ell(Y,X_1^{\dagger}-W))=\sup_{X\in\mathcal{A}(x)}V(\ell(Y,X_1-W)).
\end{equation*}

The next theorem is our main result on the existence of optimizers for behavioural investors in conic models.

\begin{theorem}\label{main} Let Assumptions \ref{u}, \ref{ghj}, \ref{dual}, \ref{ef}, \ref{mcps}  and \ref{bou} be valid.
Fix $x\in G_0$. There exists $X^{\dagger}\in
	\mathcal{A}(x)$ such that 
	$$V(\ell(Y,X_1^{\dagger}-W))=\sup_{X\in\mathcal{A}(x)}V(\ell(Y,X_1-W)).$$
\end{theorem}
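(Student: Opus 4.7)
The plan is a standard compactness argument adapted to the non-concave CPT setting. I would take a maximising sequence $X^n\in\mathcal{A}(x)$, extract a weak-$*$ limit in $\mathfrak{M}^{2d}$ via Theorem \ref{altalanos}, realise that limit on the original probability space using the randomisation variable $U$ from Assumption \ref{ghj}, and then verify admissibility and optimality. Finiteness of the supremum is immediate from Assumption \ref{bou}, since $V_+\leq M$ whenever $u_+\leq M$.

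The tightness step is the crux. Fixing some $Z\in\mathfrak{D}$, I would apply integration by parts to the non-negative process $Z(x+X^n)$ and take expectations to obtain
\[
E\int_0^1 (-Z_{t-})\,dX^n_t\leq Z_0 x.
\]
Since $\dot X^n_t\in -G_t$ while $Z_{t-}\in\mathrm{int}\, G_t^*$, the integrand is non-negative, and efficient friction (Assumption \ref{ef}) combined with the continuity of the dual generators (Assumption \ref{dual}) should deliver a strictly positive, predictable weight $\delta_t$ with $-Z_{t-}\dot X^n_t\geq \delta_t|\dot X^n_t|$. This yields uniform control of $E\int_0^1 \delta_t\,d\|X^n\|_t$, which in turn supplies the tightness criterion required by Theorem \ref{altalanos} applied to the $\mathcal{D}^m\times\mathfrak{M}^{2d}$-valued sequence $(Y,\overline{X}^n)$. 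Along a subsequence the joint laws converge, and a Skorohod-type construction exploiting $U$ (Assumption \ref{ghj}) produces an $\mathcal{F}_1$-measurable $\overline X^\dagger$ realising the limit jointly with $Y$ on $(\Omega,\mathcal{F},P)$.

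Admissibility $X^\dagger\in\mathcal{A}(x)$ is then recovered from the polar description: the linear conditions $Z_t(x+X^n_t)\geq 0$ and $-Z_{t-}\,dX^n_t\geq 0$ for every $Z\in\mathfrak{D}$ are inherited by the weak-$*$ limit, and Assumption \ref{mcps} translates them back into the cone constraints defining $\mathcal{A}(x)$. Optimality then follows by combining the continuity of $\ell$, which gives convergence in law of $\ell(Y,X^n_1-W)$, with an upper semicontinuity lemma for the CPT functional $V$ under such convergence, whose statement I would place in Section \ref{negy}. The main obstacle is the tightness estimate: producing the measurable, integrable weight $\delta_t$ uniformly in $n$ is where Assumptions \ref{dual}, \ref{ef} and \ref{mcps} must interact cleanly. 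A secondary difficulty is the limit in $V$ itself, since $V_-$ is only lower semicontinuous under convergence in law, which explains why Assumption \ref{bou} bounds $u_+$ but imposes no corresponding restriction on the loss side.
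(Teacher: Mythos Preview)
Your overall strategy---maximising sequence, tightness in $\mathfrak{M}^{2d}$, Skorohod representation via Theorem \ref{altalanos}, transfer back using $U$, then verify admissibility and optimality---is exactly the paper's, and your tightness sketch is essentially the content of Lemma~3.6.4 in \cite{ks}, which the paper simply invokes to obtain an equivalent measure $Q$ with $\sup_n E_Q\|X(n)\|_1<\infty$.

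The genuine gap is \emph{adaptedness of the limit}. Your transfer step yields an $\mathcal{F}_1$-measurable $X^\dagger$ with the correct joint law with $Y$, but membership in $\mathcal{A}(x)$ requires that $^t X^\dagger$ be $\mathcal{F}_t$-measurable for every $t$, and this does not come for free from Lemma \ref{transfer}. It is precisely here that Assumption \ref{u} (independent increments of $Y$) is used, and you never invoke it. The paper's argument is: since each $X(n)$ is adapted, $^t X(n)$ is independent of $Y_u-Y_t$ for $u>t$; independence is a joint-law property, so it survives passage to the Skorohod copies and to the a.s.\ limit, giving independence of $^t X^\dagger$ from the post-$t$ increments $_tY$. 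One then applies Lemma \ref{fuggi} with $\mathfrak{a}=(U,{}^tY)$ and $\mathfrak{b}={}_tY$ to conclude that $^t X^\dagger$ is $\sigma(U,{}^tY)=\mathcal{F}_t$-measurable. Without this step you have no adapted candidate, and Assumption \ref{mcps}---which is formulated for $\mathcal{H}_s$-measurable $\xi$---cannot be applied directly either; the paper needs a further Fubini decomposition $X^\dagger_s=j(U,Y)$ to reduce to $\mathcal{H}_s$-measurable sections $j(u,Y)$.

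A secondary point you gloss over: weak-$*$ convergence in $\mathfrak{M}^{2d}$ does not give $\tilde X_s(n)\to X^*_s$ at every $s$, only outside a random countable set of atoms of the limiting measure. The paper uses Fubini to obtain a fixed Lebesgue-null exceptional set $T$, verifies the cone constraint for $s\notin T$, and then closes the gap at $s\in T$ by right-continuity.
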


\begin{remark}{\rm Let 
$u:\mathbb{R}^d\to\mathbb{R}$ be continuous and bounded from above.
The arguments in the proof below can also establish that there is 
$X^{\dagger}\in\mathcal{A}(x)$ such that 
$$
Eu(X^{\dagger}_1)=\sup_{X\in\mathcal{A}(x)}Eu(X_1).$$
}
\end{remark}

\begin{proof}[Proof of Theorem \ref{main}] 
Let $X(n)\in\mathcal{A}(x)$, $n\in\mathbb{N}$ be such that 
	$$
	V(\ell(Y,X_1(n)-W))\to\sup_{X\in\mathcal{A}(x)}V(\ell(Y,X_1-W)),\ n\to\infty.
	$$
Applying Lemma 3.6.4 of \cite{ks} to the set $\{X(n), n \in \mathbb{N}\}$ with the choice
 $\kappa := |x|$, there exists a probability measure $Q \sim P$ such that 
$\sup_{n \in \mathbb{N}} E_Q ||X(n)||_1 < \infty$\footnote{In \cite{ks}, $Z$ and $X$ are adapted to the same filtration 
$\mathcal{H}$. Here, we allow $X$ to be a $\mathcal{F}$-adapted process but this causes no problem.}.   
Let $c_n, n \in \mathbb{N}$ be an arbitrary sequence of positive real numbers converging to $0$. Letting $\varepsilon > 0$, 
the Markov inequality yields 
$$
\lim_{n \to \infty} Q(c_n||X(n)||_1\geq \varepsilon) \le \lim_{n \to \infty} c_n E_Q[||X(n)||_1]/{\varepsilon} = 0.
$$
In other words, $c_n || X(n) ||_1$ converges to $0$ in $Q$-probability and hence in $P$-probability as well by the equivalence of $Q$ and $P$. 
Lemma 3.9 of \cite{k} shows that the sequence of $\mathbb{R}$-valued random variables $\|X(n)\|_1$, $n\in\mathbb{N}$ is tight. 

For any $r >0$, the set $\{m\in\mathfrak{M}^{2d}:||m||_1\leq r\}$ is weak-$*$ compact by the Banach-Alaoglou theorem hence
the $\mathfrak{M}^{2d}$-valued sequence $X(n)$ is tight. So is the sequence $(X(n),Y)$. Applying Theorem \ref{altalanos}, there exist
a probability space	$(O,\mathcal{O},R)$ and
	$\mathfrak{M}^{2d}\times\mathcal{D}^m$-valued random variables 
	$(\tilde{X}(n),Y(n))$ that converge
	$R$-a.s. to $(X^*,Y^*)$ along a subsequence (for which we keep the same notation) and 
	$\mathrm{Law}_R(\tilde{X}(n),Y(n))=\mathrm{Law}(X(n),Y)$, $n\in\mathbb{N}$. 
By subtracting a further subsequence we may and will also assume that 
\begin{equation}\label{eq}
\tilde{X}_1(n)\to X^*_1\mbox{ in law as }n\to\infty.
\end{equation}

For each $k\in\mathbb{N}$, let $f_k:\mathcal{D}^m\to\mathcal{C}^d$ be such that $\zeta^k=f_k(Y)$.
Such functions exist by Doob's lemma. Passing to a further
	subsequence through a diagonal argument, we may and will assume that, for each $k\in\mathbb{N}$, 
$\zeta^k(n):=f_k(Y(n))\to \zeta^{*k}:=f_k(Y^*)$ $R$-a.s. in $\mathcal{C}^d$ when
$n\to\infty$ 
	by Lemma \ref{fonction} and by the fact that each $Y(n)$
	has the same law (on $\mathcal{D}^m$). Analogously, we may and will assume $W(n):=h(Y(n))\to W^*:=h(Y^*)$
	$R$-a.s. in $\mathbb{R}^d$.


Let us define the analogue of the functionals $V_{\pm}$, $V$, for non-negative random variables
$X$ on $(O,\mathcal{O},R)$.
\begin{eqnarray*}
	V_+^R(X):=\int_0^{\infty} w_+\left(R\left(
		u_+\left(X\right)\geq y
		\right)\right)dy,
\end{eqnarray*}
and
\begin{eqnarray*}
	V_-^R(X):=\int_0^{\infty} w_-\left(R\left(
	u_-\left(X\right)\geq y
	\right)\right)dy.
\end{eqnarray*}
For each real-valued random variable $X$ on $(O,\mathcal{O},R)$ with $V_+^R(X^+)<\infty$ we set 
\begin{equation*}
V^R(X):=V^R_+(X^+)-V^R_-(X^-). 
\end{equation*}
Assumption \ref{bou} and the reverse Fatou lemma imply that 
\begin{equation}\label{f}
V^R(\ell(Y^*,X^*_1-W^*))\geq \limsup_n V^R(\ell(Y(n),X_1(n)-W(n))),
\end{equation} so 
$V^R(\ell(Y^*,X^*_1-W^*))\geq \sup_{X\in\mathcal{A}(x)} V(\ell(Y,X_1-W))$. 


Let us invoke Lemma \ref{transfer} with the choice $\tilde{\phi}:=X^*$, $\tilde{H}:=Y^*$ and
$H:=Y$. We get a $\mathcal{F}_1$-measurable random element
	$X^{\dagger}:=\phi\in \mathfrak{M}^{2d}$ satisfying $\mathrm{Law}(X^{\dagger},Y)=\mathrm{Law}_R
	(X^*,Y^*)$. Let us fix $0\le t < u \le 1$. We recall that 
	$^t X(n)$ is independent from ${Y}_u - {Y}_t$, or equivalently, 
	$$
	\mathrm{Law}(^t X(n) ,{Y}_u - {Y}_t) = \mathrm{Law}(^t X(n)) \otimes 
	\mathrm{Law}({Y}_u - {Y}_t).$$
	By construction, $\mathrm{Law}(^t X(n),Y_u - Y_t)  = \mathrm{Law}_R(^t \tilde{X}(n),Y_u(n)-{Y}_t(n))$.
	This implies also 
	$$
	\mathrm{Law}_R(^t \tilde{X}(n),Y_u(n) - Y_t(n))  = \mathrm{Law}_R(^t \tilde{X}(n))\otimes \mathrm{Law}_R(Y_u(n) - {Y}_t(n)).
	$$
	Passing to the limit as $n\to\infty$,
	$$
	\mathrm{Law}_R(^t X^*,Y^*_u - Y^*_t)  = 
	\mathrm{Law}_R(^t X^*) \otimes \mathrm{Law}_R(Y^*_u - {Y}^*_t),$$
	which implies independence of $^t X^{\dagger}\in \mathfrak{M}^{2d}$ from $_t Y\in\mathcal{D}^m$ as well where
	$(_t Y)_s:= 0$ if $0\leq s\leq t$ and $(_t Y)_s:=Y_s-Y_t$, $t<s\leq 1$. 

Since $Y$ is clearly a 
	measurable function 
	of $(_t Y,^t Y)\in \mathcal{D}^m\times\mathcal{D}^m$, applying Lemma \ref{fuggi} with the choice $\mathfrak{b}:=_t Y$ and $\mathfrak{a}:=(U,^t Y)$ we get that
	$^t X^{\dagger}$ is $\mathcal{F}_t$-measurable, for all $t$.
	
The set $\mathcal{L}:=\{Z_1:\, Z\in\mathfrak{D}\}$ is a subset of the separable metric space $L^1(P)$ hence it is also separable. Let $\{Z^k_1$, $k\in\mathbb{N}\}$ be a countable dense subset of $\mathcal{L}$. 
For each $k \in \mathbb{N}$, there exist measurable functions  $g_{k,s}:\mathcal{D}^m\to \mathbb{R}^d$ such that $E[Z^k_1\vert\mathcal{H}_s]=g_{k,s}(Y)$. Let $\xi$ be an $\mathcal{H}_s$-measurable random variable. By the density of the family 
$\{Z^k_1, k \in\mathbb{N}\}$ and Assumption \ref{mcps}, if $\xi g_{k,s}(Y)\geq 0$ a.s. for each $k$ then $\xi\in G_s$ a.s. Indeed, let $Z$ be an arbitrary element of $\mathfrak{D}$ and $Z^{k_n}_1, n \in \mathbb{N}$ be a sequence in the dense subset such that $Z^{k_n}_1\to Z_1$ in $L^1(P)$, and hence, $E[Z^{k_n}_1\vert\mathcal{H}_s] \to E[Z_1\vert\mathcal{H}_s]$ in $L^1(P)$ as well. One can extract a subsequence $k_{n_l}, l \in \mathbb{N}$ along which almost sure convergence holds, i.e. $g_{k_{n_l},s}(Y) \to Z_s$, $P$-a.s. Therefore, the fact 
$\xi g_{k_{n_l},s}(Y)\geq 0$ a.s. for each $l$ implies $\xi Z_s\geq 0$ a.s. and then $\xi \in G_s$ a.s.\ by Assumption \ref{mcps}.

Fix $k \in \mathbb{N}$ for a moment. 
Since $X_s(n) \in G_s$, obviously $X_s(n) g_{k,s}(Y)\geq 0$ $P$-a.s. for each $n \in \mathbb{N}$. Hence, we obtain $\tilde{X}_s(n) g_{k,s}(Y(n)) \ge 0$, $R$-a.s. for all $n$. By construction, $\tilde{X}(n)$ tends to $X^*$ $R$-a.s. in $\mathfrak{M}^{2d}$ (equipped
with the weak-$*$ topology). 
Moreover, from the properties of weak convergence of probabilities on $\mathbb{R}$ we know that, for $R$-a.e. $\omega$, 
$\lim_{n \to \infty}\tilde{X}_s(n)(\omega) = X^*_s(\omega)$ for every 
$s \in [0,1] \setminus I(\omega)$ where $I(\omega)$ is a countable set. 
Fubini's theorem then implies that there is a fixed set $T$ of Lebesgue measure $0$ such that for $s\notin T$,
$\lim_{n \to \infty}\tilde{X}_s(n) = X^*_s$ $R$-a.e. By \eqref{eq} we may assume that $1\notin T$.

An application of Lemma \ref{fonction} gives $X^*_s g_{k,s}(Y^*) \ge 0$, $R$-a.s. for every $s \in [0,1] \setminus T$. Notice that $X^{\dagger}_s=j(U,Y)$ for some $j:[0,1]\times\mathcal{D}^m\to \mathbb{R}$ is $ \mathcal{B}([0,1])\otimes \mathcal{G}_s$-measurable
where $\mathcal{G}_s$ is generated by the coordinate mappings of $\mathcal{D}^m$ up to $s$.

This means that for
$$
B:=\cap_{k\in\mathbb{N}} \{(u,y):\ j(u,y)g_{k,s}(y)\geq 0 \}
$$
we have $[\mathrm{Leb}\times \mathrm{Law}(Y)](B)=1$. But then, for $\mathrm{Leb}$-a.e. $u$, for $\mathrm{Law}(Y)$-a.e. $y$, 
$$
j(u,y)g_{k,s}(y)\geq 0,\ k\in\mathbb{N},
$$
which implies $j(u,Y)Z^k_s\geq 0$ a.s. for $\mathrm{Leb}$-a.e. $u$ and for each $k\in\mathbb{N}$. Noting that $j(u,Y)$ is $\mathcal{H}_s$-measurable, Assumption \ref{mcps} gives $j(u,Y) \in G_s$, for $\mathrm{Leb}$-a.e. $u$. 
This means 
$X^{\dagger}_s\in G_s$ a.s.

Fix now some $t \in T$ and let $s_n, n \in \mathbb{N}$ be a sequence in $[0,1] \setminus T$ such that $s_n \downarrow t.$ Right-continuity implies that $X^{\dagger}_t \xi^k_t = \lim_{n \to \infty} X^{\dagger}_{s_n} \xi^k_{s_n} \ge 0$. We thus conclude that  $X^{\dagger}_s \in G_s$ a.s. for all $s \in [0,1]$.
	
	To prove $\dot{X}^{\dagger}_t \in -G_t$, it suffices to show 
that the integrals $\int_0^{\cdot}\zeta^k_t dX_t^{\dagger}$, $k\in\mathbb{N}$ are 
non-increasing, by Lemma \ref{dot}. Indeed, from $\dot{X}_t(n) \in -G_t$ for all $t \in [0,1]$, it follows that
	$$ \int_s^t{\zeta^k_u dX_u(n)} \le 0, P\mbox{-a.s.} $$
	for any $0 \le s < t \le 1$. Lemma \ref{n} gives us 
	$$\int_s^t{\zeta^k_u(n) d\tilde{X}_u(n)} \le 0, R\mbox{-a.s.}   $$
	Again, the facts that $\tilde{X}(n)$ tends to $X^*$ $R$-a.s. in $\mathfrak{M}^{2d}$ and $\zeta^k(n):=f_k(Y(n))$ tends to $\zeta^{*k}:=f_k(Y^*)$ 
$R$-a.s. in $\mathcal{C}^{2d}$ imply
	$$\int_s^t{\zeta^{*k}_u dX^*_u} \le 0, R\mbox{-a.s.} $$
	Thus, 
	$$\int_s^t{\zeta^{k}_u dX^{\dagger}_u} \le 0, P\mbox{-a.s.} $$
	that is, $\int_0^{\cdot}\zeta^k_t dX_t^{\dagger}$ is non-increasing.

The previous arguments show $X^{\dagger}\in\mathcal{A}(x)$.	As $\mathrm{Law}(X^{\dagger},Y)=\mathrm{Law}_R
	(X^*,Y^*)$,
	$$
	\mathrm{Law}_R\left(X^*_1-W^*\right)=
	\mathrm{Law}(X_1^{\dagger}-W),
	$$
	and \eqref{f} shows that $X^{\dagger}$
	is the maximizer we have been looking for.	
\end{proof}

\section{Auxiliary results}\label{negy} 

We denote by $\mathcal{B}(\mathbf{Z})$ the Borel-field of a topological space $\mathbf{Z}$.
A sequence of probabilities $\mu_k$, $k\in\mathbb{N}$ on $\mathcal{B}(\mathbf{Z})$ is said to be \emph{tight} if, 
for all $\varepsilon>0$,
there is a compact set $K(\varepsilon)\subset \mathbf{Z}$ such that, for all $k$, $\mu_k(\mathbf{Z}\setminus K(\varepsilon))<
\varepsilon$. Take 
$\mathbf{Z}:=\mathfrak{M}^{2d}\times \mathcal{D}^{m}$.

\begin{theorem}\label{altalanos}
	Let $\mu_k$, $k\in\mathbb{N}$
	be a tight sequence of measures on $\mathcal{B}(\mathbf{Z})$. Then there is a subsequence $k_j$, $j\in\mathbb{N}$ 
	and a probability space on which there exist $\mathbf{Z}$-valued random variables $\xi$, $\xi_j$, 
	with $\mathrm{Law}(\xi_j)=
	\mu_{k_j}$, $j\in\mathbb{N}$ and $\xi_j\to\xi$ a.s., $j\to\infty$. 
\end{theorem}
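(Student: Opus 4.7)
The plan is to invoke the almost sure representation theorem of Jakubowski, which extends Skorokhod's classical result to non-metrizable topological spaces satisfying his \emph{Condition J}: the existence of a countable family of continuous real-valued functions that separates points. Since the result is cited in the introduction as the driver behind Theorem \ref{main}, the proof of Theorem \ref{altalanos} should amount to verifying Condition J on $\mathbf{Z}$ and quoting Jakubowski's theorem.

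First I would establish Condition J on $\mathbf{Z}=\mathfrak{M}^{2d}\times\mathcal{D}^m$ one factor at a time. The Skorokhod space $\mathcal{D}^m$ is Polish, so it carries a countable family of continuous functions separating its points (for instance, by evaluating at a countable dense set of continuity points after composing with a countable dense subset of $\mathcal{C}_b(\mathbb{R}^m)$). For $\mathfrak{M}^{2d}$ with the weak-$*$ topology, separability of the predual $\mathcal{C}^{2d}$ (continuous functions on $[0,1]$ into $\mathbb{R}^{2d}$) yields a countable dense set $\{g_j\}_{j\in\mathbb{N}}\subset\mathcal{C}^{2d}$; the linear evaluations $m\mapsto\int g_j\,dm$ are weak-$*$ continuous by the very definition of that topology, and they separate points of $\mathfrak{M}^{2d}$ because $\mathfrak{M}^{2d}$ is the topological dual of $\mathcal{C}^{2d}$ and $\{g_j\}$ is dense there. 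Pulling back both countable families through the coordinate projections of $\mathbf{Z}$ gives a countable continuous separating family on the product.

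With Condition J in hand, I would apply Jakubowski's theorem directly to the tight sequence $\mu_k$: it delivers a subsequence $\mu_{k_j}$, a common probability space, and $\mathbf{Z}$-valued random variables $\xi_j$ and $\xi$ with $\mathrm{Law}(\xi_j)=\mu_{k_j}$ and $\xi_j\to\xi$ almost surely, which is precisely the conclusion sought. No further Borel-measurability issue arises because the Borel $\sigma$-field of $\mathcal{D}^m$ is generated by its coordinate mappings (Remark \ref{ipszilon}) and, analogously, the weak-$*$ Borel structure on norm-bounded balls of $\mathfrak{M}^{2d}$ is generated by the countable evaluations above.

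The genuine obstacle, and the reason the classical Skorokhod theorem cannot be cited directly, is that $\mathfrak{M}^{2d}$ with the weak-$*$ topology is not metrizable. Tightness together with Banach-Alaoglu localises the $\mu_k$ onto common norm-bounded, weak-$*$ compact subsets of $\mathfrak{M}^{2d}$; separability of $\mathcal{C}^{2d}$ makes the weak-$*$ topology metrizable on each such subset. Jakubowski's argument is designed to convert exactly this "metrizable on compacts" structure into a global almost sure coupling, so once Condition J is checked the heavy lifting is done externally and the proof reduces to a verification exercise.
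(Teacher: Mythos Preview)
Your proposal is correct and follows essentially the same route as the paper: the paper's proof simply defers to Corollary~3 and Example~5 of \cite{sicon}, which in turn verify Jakubowski's separating-family condition for $\mathfrak{M}^{2d}$ (via a countable dense subset of the separable predual $\mathcal{C}^{2d}$) and for $\mathcal{D}^m$ (as a Polish space), then invoke \cite{jakubowski}. You have reconstructed exactly that verification.
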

\begin{proof}
This follows as in Corollary 3 and Example 5 of \cite{sicon}, using results of \cite{jakubowski}, .
\end{proof}

\begin{remark}
	{\rm Note that the space $\mathbf{Z}$ is not metrizable so the well-known
		versions of Skorohod's representation theorem (see e.g. Lemma
		4.30 in \cite{k}) are not applicable.}
\end{remark}


\begin{lemma}\label{fuggi}
	Let $(A,\mathcal{A})$, $(B,\mathcal{B})$ be measurable spaces and $j:A\times B\to\mathbb{R}$
	a measurable mapping. Let $(\mathfrak{a},\mathfrak{b})$ be an $A\times B$-valued random variable.
	If $\sigma(j(\mathfrak{a},\mathfrak{b}),\mathfrak{a})$ is independent of $\mathfrak{b}$ then 
	$j(\mathfrak{a},\mathfrak{b})$ is
	$\sigma(\mathfrak{a})$-measurable.
\end{lemma}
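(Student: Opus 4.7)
Set $Y := j(\mathfrak{a},\mathfrak{b})$. The plan is to exploit the hypothesis $\sigma(Y,\mathfrak{a}) \perp \mathfrak{b}$ to show that, for every bounded Borel function $f:\mathbb{R}\to\mathbb{R}$, the random variable $f(Y)$ coincides almost surely with a $\sigma(\mathfrak{a})$-measurable random variable; then a countable-generation argument will upgrade this to $\sigma(\mathfrak{a})$-measurability of $Y$ itself.

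First I would establish the identity
$$E[f(Y)\,|\,\sigma(\mathfrak{a},\mathfrak{b})] \;=\; E[f(Y)\,|\,\sigma(\mathfrak{a})] \quad \text{a.s.}$$
The left-hand side equals $f(Y)$ since $Y$ is $\sigma(\mathfrak{a},\mathfrak{b})$-measurable. For the right-hand side, I would verify on rectangles $G_1\cap G_2$ with $G_1\in\sigma(\mathfrak{a})$ and $G_2\in\sigma(\mathfrak{b})$ that
$$E[\mathbf{1}_{G_1\cap G_2}f(Y)] \;=\; P(G_2)\,E[\mathbf{1}_{G_1}f(Y)] \;=\; E\bigl[\mathbf{1}_{G_1\cap G_2}\,E[f(Y)\,|\,\sigma(\mathfrak{a})]\bigr],$$
where the first equality uses $(f(Y),\mathbf{1}_{G_1}) \perp \mathbf{1}_{G_2}$, which follows from the hypothesis. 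A monotone class / Dynkin $\pi$-$\lambda$ argument extends the equality from the rectangles to all of $\sigma(\mathfrak{a},\mathfrak{b})$, yielding the claim. Consequently $f(Y) = E[f(Y)\,|\,\sigma(\mathfrak{a})]$ a.s., so $f(Y)$ is a.s.\ equal to a $\sigma(\mathfrak{a})$-measurable random variable.

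Applying the above to $f_q(y):=\mathbf{1}_{(-\infty,q]}(y)$ for each $q\in\mathbb{Q}$, there exist $\sigma(\mathfrak{a})$-measurable random variables $h_q$ with $\mathbf{1}_{\{Y\le q\}}=h_q$ a.s. Taking the countable intersection of the full-measure sets, on a set $\Omega_0$ of probability one these identities hold simultaneously for every rational $q$, and on $\Omega_0$ the value of $Y$ can be recovered from $\mathfrak{a}$ via $Y=\inf\{q\in\mathbb{Q}:h_q=1\}$ (with $\inf\emptyset:=+\infty$). Modifying $Y$ on the null complement (harmless under the paper's completed filtrations) gives a $\sigma(\mathfrak{a})$-measurable version of $j(\mathfrak{a},\mathfrak{b})$, as required.

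The only genuinely delicate point is the monotone class step in the second paragraph: extracting from the joint independence of $(Y,\mathfrak{a})$ and $\mathfrak{b}$ the conclusion that $\mathfrak{b}$ adds no information about $f(Y)$ beyond $\mathfrak{a}$. Everything else — the passage through indicators $\mathbf{1}_{\{Y\le q\}}$, the countable gluing, and the absorption of null sets — is routine.
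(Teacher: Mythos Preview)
Your argument is correct. The conditional-expectation identity $f(Y)=E[f(Y)\mid\sigma(\mathfrak{a})]$ a.s.\ follows from the independence hypothesis exactly as you indicate: on rectangles $G_1\cap G_2$ the factorisation $E[\mathbf{1}_{G_1}\mathbf{1}_{G_2}f(Y)]=P(G_2)E[\mathbf{1}_{G_1}f(Y)]$ uses that $\mathbf{1}_{G_1}f(Y)$ is $\sigma(Y,\mathfrak{a})$-measurable, and a $\pi$--$\lambda$ argument extends it to all of $\sigma(\mathfrak{a},\mathfrak{b})$. The passage through the countable family $\mathbf{1}_{\{Y\le q\}}$, $q\in\mathbb{Q}$, and the recovery of $Y$ as an infimum are standard. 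Your caveat about null sets is well taken: as literally stated the lemma only yields a $\sigma(\mathfrak{a})$-measurable random variable equal to $j(\mathfrak{a},\mathfrak{b})$ almost surely, and one needs the completeness that is present in the paper's application to absorb the exceptional set.

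The paper does not actually prove this lemma; it merely cites Lemma~29 of \cite{sicon}. So there is no method to compare against---your proposal supplies a self-contained direct argument where the paper outsources. The conditional-expectation route you take is the natural one and is almost certainly what the cited reference does as well.
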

\begin{proof} See Lemma 29 of \cite{sicon}.
\end{proof}

We also recall Th\'eor\`eme 1 of \cite{beksy}.

\begin{lemma}\label{fonction} Let $A,B$ be separable metric spaces and 
$\xi_n\in A$, $n\in\mathbb{N}$ a sequence of 
	random variables
	converging to $\xi\in A$ in probability such that $\mathrm{Law}(\xi_n)$ is the same for all $n$.
	Then for each measurable $h:A\to B$ the random variables $h(\xi_n)$ converge to $h(\xi)$
	in probability (hence also a.s. along a subsequence).
	\hfill $\Box$
\end{lemma}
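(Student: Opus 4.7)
The plan is to reduce the claim to convergence in probability of $h(\xi_n)$ toward $h(\xi)$; the a.s.--along--a--subsequence conclusion then follows from the well-known general fact about convergence in probability. Since convergence in probability implies convergence in law, the limit $\xi$ inherits the common law $\mu := \mathrm{Law}(\xi_n)$. The key point to exploit is that the shared law $\mu$ lets me approximate the merely Borel map $h$ by a well-behaved restriction in a way that is uniform over all the $\xi_n$ and $\xi$ simultaneously.

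Fix $\varepsilon,\delta>0$. By a Lusin-type theorem for Borel probability measures on the separable metric space $A$ (using inner regularity to upgrade a closed approximating set to a compact one), I would pick a compact set $K\subset A$ with $\mu(A\setminus K)<\varepsilon$ such that $h|_K$ is continuous. Compactness of $K$ then upgrades this to uniform continuity: there exists $\eta>0$ with $d_A(a,a')<\eta$, $a,a'\in K$, implying $d_B(h(a),h(a'))<\delta$.

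Next, split the event $\{d_B(h(\xi_n),h(\xi))>\delta\}$ according to whether $\xi$ and $\xi_n$ both lie in $K$. Using that $\xi_n$ and $\xi$ share the law $\mu$, both $P(\xi_n\notin K)$ and $P(\xi\notin K)$ are bounded by $\varepsilon$, while on $\{\xi,\xi_n\in K\}$ the uniform continuity reduces the event to $\{d_A(\xi_n,\xi)\ge\eta\}$, which has probability $o(1)$ as $n\to\infty$ by the hypothesis that $\xi_n\to\xi$ in probability. This gives
\[
\limsup_{n\to\infty} P\bigl(d_B(h(\xi_n),h(\xi))>\delta\bigr) \le 2\varepsilon,
\]
and the arbitrariness of $\varepsilon$ finishes the proof.

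I expect the only potentially delicate step to be the Lusin/tightness approximation in the separable metric setting; this is standard via Ulam's theorem when $A$ is Polish, and for a general separable metric space one can fall back on a closed (rather than compact) approximating set at the cost of a slightly more careful subsequence argument. Everything else is routine triangle-inequality bookkeeping, the entire role of the hypothesis that $\mathrm{Law}(\xi_n)$ is constant being to make the exceptional set $A\setminus K$ small under \emph{every} $\xi_n$ at once.
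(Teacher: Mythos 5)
Your argument is correct in substance, but it is worth noting that the paper does not prove this lemma at all: it simply cites Th\'eor\`eme~1 of Barlow--\'Emery--Knight--Song--Yor, so your self-contained Lusin-type proof is genuinely different content. The one point you should be careful with is the one you yourself flag: extracting a \emph{compact} set $K$ with $\mu(A\setminus K)<\varepsilon$ requires tightness of $\mu$, which is Ulam's theorem on Polish spaces but can fail on a general separable metric space; and once you retreat to a merely \emph{closed} set $F$ (which the metric-space version of Lusin's theorem does provide), you lose uniform continuity of $h|_F$, so the reduction of $\{d_B(h(\xi_n),h(\xi))>\delta\}\cap\{\xi_n,\xi\in F\}$ to $\{d_A(\xi_n,\xi)\ge\eta\}$ breaks down. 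The fix you allude to does work and deserves to be spelled out: pass to a subsequence along which $\xi_n\to\xi$ a.s.; for a.e.\ $\omega$ with $\xi(\omega)\in F$, continuity of $h|_F$ at the point $\xi(\omega)$ forces $d_B(h(\xi_n(\omega)),h(\xi(\omega)))\le\delta$ for all large $n$ among those indices with $\xi_n(\omega)\in F$, so the indicator of the bad event tends to $0$ a.s.\ and dominated convergence gives $\limsup_n P(d_B(h(\xi_n),h(\xi))>\delta)\le 2\varepsilon$ along that subsequence; since every subsequence admits such a further subsequence, the bound holds for the full sequence. With that paragraph added your proof is complete in the stated generality; as written, it is complete only when $\mu$ is tight (which does cover all of the paper's applications, where $A$ is Polish).
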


\begin{lemma}\label{transfer} Let $B$ be a measurable space.
	Let $H,\tilde{H}$ be random elements in $B$ with identical laws, defined on the probability spaces $(\Xi,\mathcal{E},R)$,
	$(\tilde{\Xi},\tilde{\mathcal{E}},\tilde{R})$, respectively. Let $\tilde{\phi}$ be a random element in $\mathbf{Z}$, defined on $(\tilde{\Xi},\tilde{\mathcal{E}},\tilde{R})$. 
	Let $U$ be independent of $H$ with uniform law on $[0,1]$. There exists a measurable 
	function $f: B \times [0,1] \to \mathbf{Z}$ such that $\phi = f(H, U)$ satisfies $Law_R(H, \phi) = Law_{\tilde{R}}(\tilde{H}, \tilde{\phi})$.
\end{lemma}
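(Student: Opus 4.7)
The plan is to realise the conditional law of $\tilde\phi$ given $\tilde H$ as a measurable function of $H$ and the independent uniform $U$, i.e.\ to perform a standard \emph{noise outsourcing} (or \emph{transfer}) argument. The key inputs are (i) the existence of a regular conditional distribution of $\tilde\phi$ given $\tilde H$, and (ii) the possibility of realising any probability on $\mathbf{Z}$ as the push-forward of Lebesgue measure on $[0,1]$ by a measurable map, in a manner that is jointly measurable in the conditioning variable.

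First I would check that $\mathbf{Z}=\mathfrak{M}^{2d}\times\mathcal{D}^{m}$ carries a standard Borel structure. The space $\mathcal{D}^{m}$ is Polish under Skorohod's topology. The weak-$*$ topology on $\mathfrak{M}^{2d}$ is not metrizable, but $\mathfrak{M}^{2d}=\bigcup_{r\in\mathbb{N}}\{m:\|m\|_{1}\le r\}$ is a countable union of weak-$*$ compact balls, each metrizable by separability of $\mathcal{C}^{2d}$; the Borel $\sigma$-field generated by the weak-$*$ topology thus coincides with the $\sigma$-field generated by the evaluation maps and makes $\mathfrak{M}^{2d}$ a Lusin, hence standard Borel, measurable space. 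Next, I would form the joint law $\mu:=\mathrm{Law}_{\tilde R}(\tilde H,\tilde\phi)$ on $B\times\mathbf{Z}$ and disintegrate it as $\mu(dh,dz)=\nu(dh)\,K(h,dz)$, where $\nu=\mathrm{Law}(H)=\mathrm{Law}(\tilde H)$ and $K$ is a Markov kernel from $B$ to $\mathbf{Z}$; existence of $K$ is a consequence of the standard Borel structure on $\mathbf{Z}$.

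Then I would use $U$ to outsource $K$. Fixing a measurable Borel isomorphism between $\mathbf{Z}$ and a Borel subset of $[0,1]$, the classical inverse-distribution-function construction produces a jointly measurable map $f:B\times[0,1]\to\mathbf{Z}$ such that, for every $h\in B$, the push-forward of the Lebesgue measure on $[0,1]$ under $f(h,\cdot)$ equals $K(h,\cdot)$. Setting $\phi:=f(H,U)$ on $(\Xi,\mathcal{E},R)$ and using independence of $U$ from $H$ together with Fubini, one gets for every $A\in\mathcal{B}(B)$, $C\in\mathcal{B}(\mathbf{Z})$ that
$$
R\bigl((H,\phi)\in A\times C\bigr)=\int_{A}\!\!\int_{0}^{1}\!\!\mathbf{1}_{C}(f(h,u))\,du\,\nu(dh)=\int_{A}K(h,C)\,\nu(dh)=\mu(A\times C),
$$
so $\mathrm{Law}_{R}(H,\phi)=\mathrm{Law}_{\tilde R}(\tilde H,\tilde\phi)$ as required.

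The main obstacle is the verification that the disintegration and the measurable coding by $U$ are available despite the non-metrizability of the weak-$*$ topology on $\mathfrak{M}^{2d}$ (which is explicitly flagged after Theorem~\ref{altalanos}). Once the standard-Borel identification of $\mathbf{Z}$ is in place, the rest is the classical transfer-by-uniform argument, and in fact a direct appeal to Lemma~30 of \cite{sicon}, from which the present statement is a specialisation, finishes the proof.
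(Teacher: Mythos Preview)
Your proposal is correct and follows essentially the same approach as the paper: both identify $\mathbf{Z}$ as a countable increasing union of Polish (weak-$*$ compact, metrizable) subspaces so as to obtain a standard Borel structure, after which the classical noise-outsourcing/transfer argument applies. The paper simply cites this last step as Lemma~31 of \cite{sicon} (you wrote Lemma~30), whereas you spell out the disintegration and inverse-CDF construction explicitly.
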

\begin{proof} Notice that the topological space 
$\mathbf{Z}$ is the union of its closed, increasing subspaces 
$A_n$, $n\in\mathbb{N}$ which are Polish spaces (with appropriate metrics). Now use Lemma 31 of \cite{sicon}.
\end{proof}

We give a criterion of admissibility for $\dot{X}$.

\begin{lemma}\label{dot}
A $\mathcal{F}$-adapted process $X$ of bounded variation satisfying $\dot{X}_t \in -G_t$ for all $t \in [0,1]$ if and only if the integrals
$\int_0^{\cdot}\zeta^k_t dX_t$ are non-increasing, for all $k\in\mathbb{N}$.
\end{lemma}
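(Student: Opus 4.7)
The plan is to prove the two implications separately, using the bipolar duality between the cones $G_t$ and $G_t^*$ together with a routine Radon–Nikodym argument.

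For the forward direction, assume $\dot{X}_t\in -G_t$ for all $t$. Since $\zeta^k_t\in G_t^*$ by Assumption \ref{dual}, the definition of the dual cone gives $\zeta^k_t\cdot\dot{X}_t\leq 0$. Writing the Stieltjes integral via $dX_t=\dot{X}_t\,d\|X\|_t$, we get $\int_s^t\zeta^k_u\,dX_u=\int_s^t\zeta^k_u\dot{X}_u\,d\|X\|_u\leq 0$ for every $0\le s\le t\le 1$, hence the path $t\mapsto\int_0^t\zeta^k_u\,dX_u$ is non-increasing.

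For the converse, assume all the integrals are non-increasing. Then for each fixed $k$ and almost every $\omega$, the real-valued measure $A\mapsto\int_A\zeta^k_u(\omega)\dot{X}_u(\omega)\,d\|X(\omega)\|_u$ is non-positive on intervals, hence on all Borel sets. This forces $\zeta^k_u(\omega)\cdot\dot{X}_u(\omega)\le 0$ for $\|X(\omega)\|$-a.e. $u\in[0,1]$. Intersecting the countably many full-measure sets (one per $k$), we get that, pathwise, $\zeta^k_u\cdot\dot{X}_u\leq 0$ holds simultaneously for all $k\in\mathbb{N}$ on a $\|X(\omega)\|$-full measure set $N(\omega)^c$. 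Since $\{\zeta^k_u(\omega):k\in\mathbb{N}\}$ generates the polyhedral cone $G_u^*(\omega)$, this yields $z\cdot\dot{X}_u(\omega)\le 0$ for every $z\in G_u^*(\omega)$, i.e.\ $\dot{X}_u(\omega)\in -G_u^{**}(\omega)=-G_u(\omega)$ (bipolarity, using that $G_u$ is a closed convex cone).

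To conclude with the convention that $\dot X_t\in -G_t$ for every $t$ (not just $\|X\|$-a.e.), we redefine $\dot{X}_t(\omega):=0$ on the $\|X(\omega)\|$-null set $N(\omega)$; since $0\in -G_t(\omega)$ always and this modification does not change $\dot X$ as a Radon–Nikodym derivative, the resulting version is jointly measurable and satisfies $\dot{X}_t\in -G_t$ everywhere. The main—essentially only—technical nuisance is this passage from an $\|X\|$-a.e.\ statement to an everywhere statement, which is handled by the modification on a null set, together with the observation that countably many $\zeta^k$ suffice because $G^*$ is polyhedrally generated by the same countable family.
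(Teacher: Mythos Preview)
Your argument is correct and is precisely the standard bipolar/duality computation that underlies Lemma~3.6.1 of \cite{ks}, to which the paper simply defers. In other words, you have written out in detail what the paper's one-line reference points to: the forward direction is immediate from $\zeta^k_t\in G_t^*$, and the converse uses that a non-positive signed measure has a non-positive density $\|X\|$-a.e., then the countable family of generators together with bipolarity of the closed polyhedral cone $G_t$ recovers $\dot X_t\in -G_t$; the final passage from ``$\|X\|$-a.e.\ $t$'' to ``every $t$'' by redefining on a null set is the usual cosmetic step.
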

\begin{proof}
Identical to the proof of Lemma 3.6.1 of \cite{ks}. 
\end{proof}

\begin{lemma}\label{n}
Let $Y,\tilde{Y}$ be c\`adl\`ag processes, $X, \tilde{X}$ bounded variation processes defined on two probability spaces $(\Xi,\mathcal{E},R)$,
$(\tilde{\Xi},\tilde{\mathcal{E}},\tilde{R})$, respectively. Assume that $(\tilde{Y},\tilde{X})$ has the same law as $(Y, X)$. 
Let $f:\mathcal{D}^m\to\mathcal{C}^d$ be measurable. 
Then for all $0\le s < t \le 1$, it holds that
\begin{equation}\label{it}
\mathrm{Law}_{\tilde{R}}\left(\int_s^t{f(\tilde{Y})_u\, d\tilde{X}_u}\right)  = \mathrm{Law}_R\left(\int_s^t{f(Y)_u\, dX_u}\right).
\end{equation} 
\end{lemma}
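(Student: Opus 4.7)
The plan is to reduce the claim to a measurability statement: if the functional
$$\Phi(y,x) := \int_s^t f(y)_u\, dx_u$$
is a Borel-measurable map on the product of path spaces carrying $(Y,X)$ and $(\tilde Y,\tilde X)$, then pushing forward under $\Phi$ gives the equality of laws directly from the hypothesis $\mathrm{Law}_R(Y,X) = \mathrm{Law}_{\tilde R}(\tilde Y,\tilde X)$. So the real work is to produce $\Phi$ as a genuine measurable function of the paths, not just something defined $\omega$-by-$\omega$.

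For the measurability, I would use a Riemann–Stieltjes approximation. Fix a sequence of partitions $s=u_0^n<u_1^n<\cdots<u_n^n=t$ with mesh tending to $0$, and define
$$S_n(y,x) := \sum_{i=0}^{n-1} f(y)_{u_i^n}\,\bigl(x_{u_{i+1}^n}-x_{u_i^n}\bigr).$$
Each $S_n$ is measurable in $(y,x)$: the coordinate evaluations $x\mapsto x_u$ on $\mathcal{D}^d$ are Borel by Remark \ref{ipszilon}, $f:\mathcal{D}^m\to\mathcal{C}^d$ is measurable by assumption, and evaluation $\mathcal{C}^d\ni g\mapsto g_u\in\mathbb{R}^d$ is continuous. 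Because $f(y)$ is continuous on $[s,t]$ and $x$ has bounded variation, classical Riemann–Stieltjes theory yields $S_n(y,x)\to\int_s^t f(y)_u\,dx_u$ pointwise. Hence $\Phi$ is measurable as a pointwise limit of measurable functions.

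With $\Phi$ measurable, the conclusion is immediate: $(\tilde Y,\tilde X)\stackrel{d}{=}(Y,X)$ as random elements of $\mathcal{D}^m\times\mathcal{D}^d_{\mathrm{bv}}$, so
$$\mathrm{Law}_{\tilde R}\bigl(\Phi(\tilde Y,\tilde X)\bigr)=\mathrm{Law}_R\bigl(\Phi(Y,X)\bigr),$$
which is exactly \eqref{it}.

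The only delicate point I expect is the behaviour of the Riemann–Stieltjes sums at a possible jump of $x$ at the endpoints $s$ or $t$, since the càdlàg convention matters for the precise value of $\int_s^t$. This is handled by choosing the partition endpoints consistently with the integration convention used in the paper (so that the sums capture the jump at $t$ and not at $s$, say); because $f(y)$ is continuous the jump contribution $f(y)_t\,\Delta x_t$ is reproduced exactly in the limit. No other subtleties arise, and the argument mirrors the proof of Lemma 3.6.1 in \cite{ks}.
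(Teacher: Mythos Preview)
Your argument is correct and matches the paper's approach: its entire proof is the single line ``We approximate $f$ by step functions and then pass to the limit,'' and your Riemann--Stieltjes partition is one concrete way to carry this out. One cosmetic point: in this paper $X$ is regarded as an $\mathfrak{M}^{2d}$-valued random element rather than a $\mathcal{D}^d$-valued one, so the measurability of the evaluations $x\mapsto x_u$ should be justified via limits of weak-$*$ continuous functionals rather than by invoking Remark~\ref{ipszilon}; this does not affect the substance of your proof.
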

\begin{proof}
We approximate $f$ by step functions and then pass to the limit.
\end{proof}

\newcommand{\etalchar}[1]{$^{#1}$}

\end{document}